\theoremstyle{plain}
\newtheorem{theorem}{Theorem}[section]
\newtheorem{lemma}[theorem]{Lemma}
\theoremstyle{definition}
\newtheorem{definition}[theorem]{Definition}
\theoremstyle{remark}
\DeclareMathOperator*{\argmax}{arg\,max}
\newcommand\blfootnote[1]{%
  \begingroup
  \renewcommand\thefootnote{}
  \footnotetext{#1}%
  \addtocounter{footnote}{-1}%
  \endgroup
}
\title{Decentralized Inference via Capability Type Structures \\in Cooperative Multi-Agent Systems}
\author[1]{Charles Jin}
\author[1]{Zhang-Wei Hong}
\author[1,*]{Farid Arthaud} 
\author[1,*]{Idan Orzech}
\author[1]{Martin Rinard}
\affil[1]{%
    CSAIL, MIT, Cambridge, MA 02139
}
\affil[*]{%
    equal contribution
}
\begin{document}
\maketitle

\begin{abstract}
This work studies the problem of ad hoc teamwork in teams composed of agents with differing computational capabilities. We consider cooperative multi-player games in which each agent's policy is constrained by a private capability parameter, and agents with higher capabilities are able to simulate the behavior of agents with lower capabilities (but not vice-versa). To address this challenge, we propose an algorithm that maintains a belief over the other agents' capabilities and incorporates this belief into the planning process. Our primary innovation is a novel framework based on capability type structures, which ensures that the belief updates remain consistent and informative without constructing the infinite hierarchy of beliefs.  We also extend our techniques to settings where the agents' observations are subject to noise. We provide examples of games in which deviations in capability between oblivious agents can lead to arbitrarily poor outcomes, and experimentally validate that our capability-aware algorithm avoids the anti-cooperative behavior of the naive approach in these toy settings as well as a more complex cooperative checkers environment.
\end{abstract}

\blfootnote{
    Correspondence to \href{mailto:ccj@csail.mit.edu}{ccj@csail.mit.edu}.
}
\section{Introduction}

Multi-agent systems (MAS) research has seen an increasing interest in developing autonomous agents that can collaborate with others in order to achieve common goals. In many real-world scenarios, agents may have different computational capabilities. For example, in a fleet of autonomous robots, some robots may have better sensors or processing power than others, which can affect their ability to complete the joint objective. As sensors can fail and processing power can fluctuate, it is important to develop techniques that allow for online inference and adaptation to differing capabilities.

Ad hoc teamwork~\citep{mirsky2022survey} has emerged as framework to study the problem of cooperation in MAS, where agents form teams on the fly to solve a given task without prior coordination. Ad hoc teamwork can be beneficial when communication between agents is limited or when agents need to collaborate with unfamiliar or unknown agents. 
However, existing approaches do not account for the differences in computational capabilities between agents, and also generally support only a single adaptive agent.

We propose a capability-aware ad hoc teamwork approach, where agents must adapt to each others' computational abilities. Our setting assumes a hierarchy of capabilities, where agents with stronger capabilities can simulate the policies of agents with weaker capabilities. Our approach builds on prior work on the public agent method \citep{nayyar2013decentralized} for decentralized control as well as Harsanyi's seminal work on type structures \citep{harsanyi1967games} for Bayesian games. In addition, our work extends prior approaches to ad hoc teamwork by taking into account the capabilities of agents. We demonstrate the effectiveness of our approach through experiments in several toy settings and a cooperative checkers task. Our experiments show that our capability-aware algorithms outperform traditional approaches that do not consider computational differences between agents.

\section{Related Work}

The objective of ad hoc teamwork \citep{mirsky2022survey} is designing agents that can adapt to new teammates with limited coordination. However most works focus on the setting of a single ad hoc teamwork agent, with the remainder of agents being non-adaptive. A notable exception is \cite{albrecht2016belief}, in which the teammates may adapt using separate strategies, avoiding the problem of higher order beliefs. To the best of our knowledge, this work is also the first to consider ad hoc teamwork between agents of different computational capabilities.

Our work is also related to teammate/opponent modeling in cooperative multi-agent planning~\citep{torreno2017cooperative}. Cooperative multi-agent planning aims to coordinate a team of agents to maximize a joint utility function. Classic works employ heuristic search algorithms on each agent to maximize the joint utility. For example, \citep{desaraju2011decentralized} coordinate collision-free paths for agents in a team using the rapidly-exploring random tree (RRT) algorithm. In addition to heuristic search, prior works \citep{lowe2017multi,foerster2018counterfactual} in multi-agent reinforcement learning \citep{sutton2018reinforcement} jointly learn policies for each agent, including multi-agent belief-systems \citep{ouyang2016dynamic,foerster2017learning,foerster2019bayesian, vasal2022framework}. In contrast, we focus on the problem of online inference and adaptation, with particular attention to the complexities that arise when the inference procedure is also constrained by capabilities.

Finally, many works consider computationally bounded agents, particular in the domain of human-robot interaction.
Humans are well-known to exhibit sub-optimal behavior, which poses a challenge to works that model humans as perfectly rational. A common approach, called \textit{bounded rationality}, instead assumes agents to be (approximately) optimal subject to certain constraints \citep{russell1991right, simon1997models}. For instance, several works consider agents which have limited capacity to reason recursively about other agents \citep{9658965, wen2019modelling}, and prove convergence assuming the recursive depths in the population follows a Poisson distribution. \citep{nikolaidis2016formalizing} models human teammates as having bounded memory in the sense that they reason over only the most recent $k$ interactions. However, all these works treat the amount of ``boundedness'' as a known quantity; in contrast, our work studies the problem of adapting to an unknown level of boundedness.

\section{Background and Setting}

We consider a fully cooperative stochastic game setting with $N \ge 2$ players. To model players with heterogeneous capabilities, we assume capabilities are drawn from a totally ordered set of known \textbf{capability types} $\mathcal{C}$. For instance, consider the class of policies that combines an exhaustive look-ahead search down to a maximum depth of $k$ with a value function approximation at the leaves. A set of capability types for this class of policies could specify a set of upper bounds on the depth of the search, with stronger capabilities correspond to deeper searches.  We adopt the convention that types lower ($<$) in the hierarchy are weaker in terms of capability.

At the beginning of the game, each player $i$ is randomly assigned a capability type $c_i$ from $\mathcal{C}$. As the type of a player restricts their ability to either play or even consider certain strategies, in general, optimal play under different assignments of capabilities may also require different (joint) policies. Our objective is for players to learn each others' private capability types and converge on an informed joint policy without explicit communication. For clarity of presentation, we will assume that the base game (with public types) is fully observable with a single agent acting in each timestep.

Formally, the game proceeds as a discrete-time partially observable decentralized Markov decision process (Dec-POMDP) \citep{littman1994markov}, given by a tuple $\langle \mathcal{S}, \mathcal{N}, \mathcal{C}, \mathcal{A}, P, R, \rho_0, \gamma \rangle$, where 
$\mathcal{S}$ is the set of all states of the base game, 
$\mathcal{N} = \{1, 2, \ldots, N\}$ is the set of players,
$\mathcal{C}$ is the set of capability types,
$\mathcal{A}$ is the set of available actions,
$P:\mathcal{S}\times\mathcal{A}\to\Delta(\mathcal{S}\times\mathcal{N})$ gives a distribution over state transitions,
$R:\mathcal{S}\times\mathcal{A} \to \Delta([0, 1])$ gives a distribution over (bounded) rewards,
$\rho_0 = \Delta(\mathcal{S}\times\mathcal{N}\times \mathcal{C}^N)$ is a distribution over the initial state, and
$\gamma \in (0, 1)$ is the discount factor.
Each episode of the game begins with an initial state $s(0)$, player $i(0)$, and assignment of capabilities $C = (c_1, \ldots, c_N)$ drawn from $\rho_0$. Each player privately observes their assigned capability type. At timestep $t$, all players observe the state $s(t)$, then player $i(t)$ selects an action $a(t) \in \mathcal{A}$; all players receive a reward $r(t) \sim R(s(t), a(t))$ and transition to the next timestep $t{+}1$ with state $s(t{+}1), i(t{+}1) \sim P(s(t), a(t))$. The objective is to maximize the expected joint return $\mathbb{E}\big[\sum^{\infty}_{t=0}\gamma^t r(t)$\big].

\subsection{Planning in Dec-POMDPs}

In a single-agent POMDP setting where the environment is partially observed, the agent maintains a belief $b$ over the true state of the environment, which is initialized to some prior distribution and subsequently updated via Bayes rule as the game proceeds. Such a belief serves as a sufficient statistic for the agent's history of actions and observations \citep{kaelbling1998planning}. Each action then induces a posterior distribution over not only the next state but also the next belief. We can thus form the corresponding \textit{belief-state MDP} by treating the belief as part of the state, which reduces the POMDP to a standard MDP and permits solving the POMDP using standard techniques such as value iteration.

However, in multi-agent settings, such first-order beliefs about the environment are not enough, because other agents may also have partially observed internal states containing private information. In particular, as a player's actions can depend on privately-held \textit{beliefs}, in order to compute an optimal policy, other agents must maintain second order beliefs about each others' first order beliefs, \textit{ad infinitum}. For instance, interactive POMDPs \citep{gmytrasiewicz2005framework} model such beliefs explicitly to a given depth, but in general this technique scales poorly with the size of the belief space (and number of agents).

Another approach is to instead construct a (fictitious) \textit{public agent} \citep{nayyar2013decentralized}, whose beliefs are conditioned on information observable by all agents. Each player's policy is also assumed to be derived from a public policy $\pi$, which depends only on on the private information $s_i^{pri} \in \mathcal{S}_i^{pri}$ and the public beliefs $b^{pub} \in \mathcal{B}^{pub}$. More explicit, the public policy $\pi$ consists of a sequence of per-timestep action-selection functions $\pi = \{\pi(t) : \mathcal{S}^{pri} \times \mathcal{B}^{pub} \to \Delta(\mathcal{A})\}_{t=0}^\infty$. Player $i$'s policy $\pi_i$ is then defined to be $\pi_i(\cdot) := \pi(s_i^{pri}, \cdot)$. As different private information $s_i^{pri}$ yields different policies in general, this formulation allows the public belief about $s_i^{pri}$ to be updated via Bayes rule after each action by player $i$:
\begin{align}
\label{eq:pub_belief_update}
    P(s_i^{pri} | a, b^{pub}) \propto P(a | s_i^{pri}, b^{pub}) \cdot P(s_i^{pri} | b^{pub})
\end{align}
where $P(a | s_i^{pri}, b^{pub})$ is given by the public policy $\pi(s_i^{pri}, b^{pub})$. As all players can compute the public agent's belief update, i.e., $b^{pub}$ is \textit{common knowledge}, this strategy obviates the need to model higher-order beliefs.

Finally, given one policy per player, we can form the joint policy $\Pi = \{\pi_1, \ldots, \pi_N\}$, which yields a distribution over trajectories (and hence rewards) by selecting each player's actions according to their respective policy. We denote the expected reward of a joint policy $\Pi$ as $J(\Pi) := \mathbb{E}\big[ \sum^{\infty}_{t=0}\gamma^t r(t)\big]$. We denote the belief state for player $i$ as $s_i := (s_i^{pri}, b^{pub})$. The value function of the policy for the joint belief state $s := (s_i)_{i \in \mathcal{N}}$ and time $t$ is
\begin{align}
\label{eq:value_def}
V^\Pi(s, t) &:= \gamma^{-t}\mathbb{E} \big[ \sum^{\infty}_{t'=t}\gamma^{t'} r(t') \mid s \big].
\end{align}

In this work, we will also consider a dual perspective that arises from \textit{value function approximation}. Given an arbitrary value function $V$, the corresponding $Q$ function is
\begin{align}
Q(s, a, t) 
&:= 
\label{eq:q_fn}
\mathbb{E}_{r, s'} \big[ r + \gamma V(s', t{+}1) \mid a \big],
\end{align}
where $s'$ is the next (stochastic) belief state after taking action $a$. One can then define a new, greedy policy $\Pi^V$ by
\begin{align}
\Pi^V(s, t) 
&:= 
\label{eq:value_fn_policy}
\argmax_{a \in \mathcal{A}} Q(s,a,t),
\end{align}
with ties broken randomly. For a value function $V$ that is derived from some existing policy $\Pi$ according to \Cref{eq:value_def}, the Policy Improvement Theorem \citep{Bellman:1957} guarantees that this greedy policy $\Pi^V$ achieves better rewards than the original policy $\Pi$.

\subsection{Capability-Feasible Policies}

We next describe how we formally model players with varying computational capabilities using capability types. In our setting, player $i$'s private information $s_i^{pri}$ consists of their capability type $c_i \in \mathcal{C}$. Their policy under the public agent method is given by $\pi_i(b^{pub}) := \pi(c_i, b^{pub})$, where $\pi$ is a known public policy. Since the policy depends only on the type of the player, it suffices to consider the collection of public \textbf{typed policies} $(\pi^c)_{c \in \mathcal{C}}$, where $\pi^c(\cdot) := \pi(c, \cdot)$.

We say that the policy $\pi^{c}$ is \textbf{feasible} for a player of type $c'$ if $c' \ge c$. Players are allowed to evaluate only the policies that are feasible for their capability type. This prevents the player from ``pretending'' to play as the stronger type $c$ by simply evaluating $\pi^{c}$. For instance, consider again a planning algorithm equipped with a capability type that bounds the lookahead depth. The set of feasible policies are those with equal or lesser lookahead depth.

The introduction of capability types introduces several challenges to the direct application of the public agent method. First, in the original formulation, all players should be able to evaluate all the public policies in order to compute the public belief update in \Cref{eq:pub_belief_update}. However, in our case the restriction of players to evaluating only feasible policies prevents agents from performing the full belief update (which would require computing the posterior probability $P(c | a)$ for all $c \in \mathcal{C}$). A more fundamental problem is that the public beliefs $b^{pub}$ must be computable by all players, and in particular, the weakest player. Hence, the public agent method devolves to using only the \textit{weakest} player's inferences. To contend with this challenge, the next section develops a generalization of the public agent method inspired by \textit{Harsanyi type structures}, which allow us to maintain beliefs for different agents while still avoiding the infinite belief hierarchy.

\section{Capability-Aware Ad Hoc Teamwork}

In this section, we describe a general multi-agent ad-hoc teamwork framework that adjusts for the capabilities of other players over each episode. To start, each player independently maintains a belief over the capabilities of all players, which is updated each time the player observes an action. Players then adapt their play to incorporate their belief about the capability of the other players. We identify two main challenges in this setting. First, since players interact only through the base game, they cannot coordinate their belief updates; hence one technical challenge is to overcome the non-stationarity in the environment from any single player's perspective, as other players are also adjusting their behaviors online. Second, players may, in general, need to consider an infinite hierarchy of beliefs: because other players' beliefs affect their future actions, the current player's action should also incorporate \textit{second order} beliefs, i.e., beliefs about other players' beliefs; but other players now also have second order beliefs, so one needs to maintain third-order beliefs, \textit{ad infinitum}.

To address these challenges, we propose a learning framework based around \textit{capability type structures}, which are beliefs that preserve the capability ordering:

\begin{definition}
Let $(\mathcal{C}, \le)$ be a set of capability types. Given a base set $X$, a \textbf{capability type structure}, denoted $\mathcal{B}_X$, is a collection of real-valued functions $B_X^c : X \to \mathbb{R}$ indexed by $\mathcal{C}$ and \textbf{reduction} operators $\downarrow_{c}$ such that if $c' \le c$ then ${B_X^c\downarrow_{c'}} = B^{c'}$. In this context, we refer to each $B_X^c$ as a belief of type $c$, or a \textbf{typed belief} in general.
\end{definition}

Let $B_X := (B^i_X)_{i \in \mathcal{N}}$ denote the collection of players' beliefs about $X$. Given a collection of reduction operators, we can form the beliefs $B_X$ into a capability type structure $\mathcal{B}_X$ if and only if for all players $i, j \in \mathcal{N}$ either (1) $c_i = c_j$ and $B^i_X$ = $B^j_X$, or (2) $c_j \le c_i$ and $B^i_X \downarrow_{c_j}$ = $B^j_X$ (and similarly when $c_i \le c_j$). If this case, we can identify player $i$'s belief $B^i_X$ with the typed belief $B^{c_i}_X \in \mathcal{B}_X$.

Capability type structures are a natural generalization of the public belief in the sense that the typed belief $B^c_X$ is common knowledge for all players whose type $c' \ge c$. This crucial point allows us to avoid regressing to the weakest player, while still allowing players to update their beliefs in a manner that retains consistency, thereby overcoming the non-stationarity of the other agents' beliefs (and hence, behaviors). Capability type structures are also related to Harsanyi type structures in that the capability type structure $\mathcal{B}_X$ generates all higher-order beliefs that any player holds for some property of interest $X$. In particular, if $B_X^c$ is the belief of player $i$ with type $c$, then $i$'s second order belief about ``what a player $j$ of type $c'$ believes about $X$'' is given by $B_X^c \downarrow_{c'}$, which, by construction, is equivalent to $j$'s first-order belief $B_X^{c'}$; it follows that all the higher-order beliefs are also equivalent to $B_X^{c'}$. In contrast, Harsanyi type structures are, in general, intractable to construct explicitly.

\subsection{Inference With Capability Type Structures}
\label{sec:noise_free_inference}

In this section, we introduce our framework for capability-adaptive multi-agent systems leveraging capability type structures. We will assume that players have access to a collection of value functions $(V^c)_{c \in \mathcal{C}}$, where $V^c(s, B^c)$ takes as input\footnote{For clarity, we will generally hide dependences on $t$, particularly of the value functions.} the current (fully observable) game state $s$, as well as a belief $B^c$ of type $c$, and outputs an estimate of the value. We analogously define a collection of corresponding $Q$ functions $(Q^c)_{c \in \mathcal{C}}$ according to \Cref{eq:q_fn}. Players then query the $Q$ functions to execute the corresponding typed policy $\pi^c$ given by the greedy policy in \Cref{eq:value_fn_policy}.

We begin by describing the initial setup. Each player maintains a belief for each player (including themselves), for a total of $N$ beliefs, where beliefs are vectors of real numbers indexed by the base set $\mathcal{C}$. We denote player $i$'s belief about player $j$ as $B^i_j$, and denote player $i$'s aggregate beliefs as $B^i := ( B^i_j )_{j \in \mathcal{N}}$. Each belief $B^i_j$ holds an unnormalized likelihood over player $j$'s capability type, conditioned on player $j$'s type being less than player $i$'s type. To initialize their beliefs, a player of type $c$ sets $B[c'] = 1$ if $c' \le c$ and 0 otherwise, for all $B \in B^i$. We additionally define a belief reduction operator $\downarrow_{c'}$ that sets all entries of the input belief vector corresponding to capabilities $c > c'$ equal to zero. Note that the collection of beliefs $\mathcal{B}_j := (B^i_j)_{i\in\mathcal{N}}$ trivially forms a capability type structure at initialization, where the base set $\mathcal{C}$ over which the beliefs are expressed represents possible values for player $j$'s type.

We next describe how players adjust their policies based on their beliefs. Given a collection of beliefs $B$, we denote by $B_{[j = c]}$ the same collection except with $B_j$ (the belief about player $j$'s type) replaced with $B_j[c'] = \infty$ if $c' \ne c$, and $0$ otherwise (this intervention conditions the beliefs on player $j$'s true type being $c$). Player $i$ selects the action that maximizes the $Q$ function of their type $c$, yielding the following greedy policy:
\begin{align}
\label{eq:belief_policy}
\pi^i(s, B^i) := \argmax_{a \in \mathcal{A}} Q^{c}(s, B^i_{[i=c]}, a)
\end{align}
with ties broken randomly.

Finally, we describe how the players update their beliefs. Each turn consists of two stages. In the first stage, the acting player $i$ plays an action $a$ according to the policy $\pi^i$. In the second stage, all players $j \in \mathcal{N}$ update their beliefs $B^j_i$ about player $i$. Denote the optimal value at capability $c$ as
\begin{align}
    v^*(c) := \max_{a' \in \mathcal{A}} Q^c(s, B^j_{[i=c]} \downarrow_{c}, a'),
\end{align}
and the set of actions to achieve $v^*(c)$ as
\begin{align}
    A^*(c) := \{a' \in \mathcal{A} | Q^c(s, B^j_{[i=c]} \downarrow_{c}, a') = v^*(c)\}.
\end{align}
Then player $j$ of type $c_j$ performs the update
\begin{align}
\label{eq:update_exact}
B^j_i[c] \leftarrow B^j_i[c] * \mathbbm{1}_{a \in A^*(c)} / |A^*(c)|
\end{align}
for all $c \le c_j$. Notice the update factor is simply the likelihood $P(a|c)$ under the policy $\pi^i$.

Denote the collection of all players' beliefs about player $j$'s type at time $t$ as $B_j(t)$. Our key result is the following:
\begin{theorem}[Consistency of belief updates]
\label{thm:consistency}
If $B_j(t)$ is a capability type structure, then so is $B_j(t{+}1)$.
\end{theorem}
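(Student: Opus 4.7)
The plan is to verify the defining property of a capability type structure directly after the update. Because only the acting player's belief component is updated, I may assume the actor at time $t$ is $j$ itself; otherwise $B_j(t{+}1) = B_j(t)$ and there is nothing to show. Fix observers $q, q' \in \mathcal{N}$ with, without loss of generality, $c_q \le c_{q'}$; the goal is to show $B^{q'}_j(t{+}1) \downarrow_{c_q} = B^{q}_j(t{+}1)$ entry by entry.

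The crucial step is a \emph{common-knowledge lemma}: for every $c \le c_q$, the multiplicative update factor $\mathbbm{1}_{a \in A^*(c)}/|A^*(c)|$ computed by $q$ equals the one computed by $q'$. Inspection of (\ref{eq:update_exact}) shows this factor depends only on $s$, $Q^c$, and the reduced collection $B^{q}_{[j=c]} \downarrow_{c}$; the intervention $[j=c]$ acts identically across observers (it overwrites the $j$-component with a Dirac at $c$), so it suffices to check that $B^{q}_r \downarrow_c = B^{q'}_r \downarrow_c$ for every other player $r$. I would obtain this by applying the capability type structure hypothesis jointly to each $B_r(t)$ (either by strengthening the induction to hold jointly over all players, or by observing that $B_r$ is not altered when the actor is $j$) together with composition of reductions: $B^{q'}_r \downarrow_c = (B^{q'}_r \downarrow_{c_q}) \downarrow_c = B^{q}_r \downarrow_c$.

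Given the common-knowledge lemma, the entrywise check is routine. For $c \le c_q$, the hypothesis on $B_j(t)$ gives $B^{q'}_j(t)[c] = B^{q}_j(t)[c]$; multiplying both sides by the shared update factor yields the equality at time $t{+}1$. For $c > c_q$, the left-hand side vanishes after the reduction $\downarrow_{c_q}$, and the right-hand side is zero because $B^q_j$ is supported on $\{c' : c' \le c_q\}$ at initialization and (\ref{eq:update_exact}) never introduces nonzero mass outside this support. The equal-type subcase $c_q = c_{q'}$ drops out of the same argument with $\downarrow_{c_q}$ acting as the identity on the relevant entries.

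The main obstacle I anticipate is the common-knowledge lemma itself: making precise what it means to reduce the entire aggregate collection $B^q$ (rather than a single belief vector) and verifying that every ingredient of $v^*(c)$ and $A^*(c)$ is effectively common knowledge among observers of capability at least $c$. Once that is pinned down, consistency at time $t{+}1$ reduces to the elementary fact that a multiplicative, entrywise update preserves the reduction-commutativity enforced by the type structure at time $t$.
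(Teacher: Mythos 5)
Your proposal is correct and follows essentially the same route as the paper's proof: show that the multiplicative update factor $\mathbbm{1}_{a \in A^*(c)}/|A^*(c)|$ is common knowledge among all observers of capability at least $c$, because it depends only on the public $s$ and $Q^c$ together with reduced beliefs that agree by the type-structure hypothesis, and then note that multiplying equal entries by equal factors preserves the structure. If anything you are slightly more careful than the paper, which silently strengthens the hypothesis to ``$B_j(t)$ is a capability type structure for all $j \in \mathcal{N}$'' exactly where your common-knowledge lemma requires agreement of the reduced beliefs $B^{q}_r \downarrow_c$ about \emph{every} player $r$, and which does not spell out the vanishing of entries above the observer's own type.
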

We provide a full proof in the Appendix. As a corollary, since the prior beliefs $B_j(0)$ are initialized as a capability type structure, it follows that $B_j(t)$ can also be collected into a type structure for all $t \ge 0$. 

Our next result states that the beliefs are accurate. We first need to clarify what the beliefs are tracking. For a belief $B_j$ of type $c$, we can define the function
\begin{align}
\label{eq:exact_cond_likelihood}
    P^c(c') := \frac{B_j[c']}{\sum_{c'' \le c} B_j[c'']}, \forall c' \in \mathcal{C}
\end{align}
whenever the denominator is non-zero.
The following theorem states that $P^c(c')$ is in fact the true (conditional) likelihood $P(c_j = c' | c_j \le c)$ as computed by the ``omnipotent'' public agent performing the \textit{exact} Bayesian updates with access to all $Q$ functions $(Q^c)_{c \in \mathcal{C}}$.
\begin{theorem}[Correctness of belief updates]
\label{thm:correctness}
Let $j$ be a player, and denote its type as $c_j$. Let $B_j$ be the belief of type $c$ at time $t$, let $H(t)$ be the full action-observation history as of time $t$, and define $P^c$ as in \Cref{eq:exact_cond_likelihood}. Then whenever $P^c$ is defined, we have that $P^c(c') = P(c_j = c' | c_j \le c, H(t))$ for all $c' \le c$. Furthermore, if $P^c$ is undefined, then $P(c_j \le c | H(t)) = 0$.
\end{theorem}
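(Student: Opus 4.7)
I would prove this theorem by induction on $t$, with the crucial step being the identification of each updater's locally-computed multiplicative factor with the true conditional action likelihood under the hypothesized type; this in turn rests on Theorem~\ref{thm:consistency}. For the base case $t = 0$, the uniform initialization $B_j[c'] = \mathbbm{1}_{c' \le c}$ gives $P^c(c') = 1/|\{c'' : c'' \le c\}|$, which matches $P(c_j = c' \mid c_j \le c)$ under the (implicit) uniform prior on $\mathcal{C}$; and since $\{c'' : c'' \le c\}$ contains $c$, $P^c$ is defined at $t=0$.

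For the inductive step, suppose the claim holds at time $t$, and let $a$ be player $j$'s observed action between time $t$ and $t{+}1$. Bayes' rule yields
\begin{align*}
P(c_j = c' \mid c_j \le c, H(t{+}1)) \;\propto\; P(a \mid c_j = c', H(t)) \cdot P(c_j = c' \mid c_j \le c, H(t))
\end{align*}
for each $c' \le c$, with normalization over $c' \le c$. The inductive hypothesis rewrites the second factor as $B_j[c']/\sum_{c'' \le c} B_j[c'']$ at time $t$, well-defined because the update factors are at most $1$, so $P^c$ being defined at $t{+}1$ forces it to be defined at $t$. Since Equation~(\ref{eq:update_exact}) scales $B_j[c']$ by $\mathbbm{1}_{a \in A^*(c')}/|A^*(c')|$, renormalization over $c' \le c$ then produces precisely the desired posterior, provided this scaling factor equals $P(a \mid c_j = c', H(t))$.

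Verifying this identity is the crux, and is where Theorem~\ref{thm:consistency} does the real work: because $\mathcal{B}_k(t)$ is a capability type structure for every player $k$, the reduction $B_k \downarrow_{c'}$ performed by the updating agent recovers the very belief about $k$ that a player of type $c'$ would hold. Combined with the intervention $[j = c']$, which substitutes a delta at $c'$ for $j$'s self-belief and hence matches $j$'s own self-knowledge under the hypothesis $c_j = c'$, this shows that the argument passed to $Q^{c'}$ inside the update is identical to the argument $j$ actually feeds into $Q^{c'}$ via Equation~(\ref{eq:belief_policy}). Therefore $A^*(c')$ coincides with $j$'s true argmax set, and uniform tie-breaking yields action probability $\mathbbm{1}_{a \in A^*(c')}/|A^*(c')|$ under $c_j = c'$, as required.

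Finally, for the undefined case, let $\tau$ denote the first time at which $\sum_{c'' \le c} B_j[c''] = 0$: at step $\tau{-}1$ the update factor was zero for every $c'' \le c$, so by the previous paragraph $P(a(\tau{-}1) \mid c_j = c'', H(\tau{-}1)) = 0$ for all such $c''$, whence $P(a(\tau{-}1) \mid c_j \le c, H(\tau{-}1)) = 0$ and $P(c_j \le c \mid H(\tau)) = 0$, a conclusion that persists at all later times. The main obstacle is the identification in the third paragraph, which requires carefully propagating the capability type structure property through the combined effect of the intervention $[j = c']$ and the reduction $\downarrow_{c'}$; the remainder is routine Bayesian bookkeeping.
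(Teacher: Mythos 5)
Your proof is correct and takes essentially the same route as the paper's: identify the multiplicative update factor $\mathbbm{1}_{a \in A^*(c')}/|A^*(c')|$ with the true action likelihood $P(a \mid c_j = c')$ and apply Bayes' rule, with the undefined-denominator case handled by noting that all hypothesized types $c'' \le c$ receive zero likelihood. The only difference is organizational — you induct step-by-step where the paper writes the telescoping product over the whole action history at once — and you are in fact more explicit than the paper in justifying the crux identity via the capability-type-structure property of \Cref{thm:consistency}, which the paper merely asserts.
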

Note that the true likelihood is not, in general, guaranteed to converge to (a delta distribution at) the true capability absent additional structural assumptions on the base game's MDP.
However by \Cref{thm:correctness}, if the true likelihood does converge, then the beliefs in the capability type structure also converge to the correct conditional likelihoods.

\subsection{Capability-Aware Ad Hoc Teamwork With Noise}
\label{sec:noisy_inference}

We next turn to a setting where players are subject to independently sampled noise. In general, the presence of noise significantly complicates belief updates in multiagent systems, as any small initial difference in beliefs due to noise can lead to observed behaviors diverging from expected behaviors, which in turn amplifies the difference in beliefs, leading to increasingly inconsistent beliefs. The main idea is to perform a \textit{tempered} version of the update in \Cref{eq:update_exact}, and show that the resulting ``likelihoods'' remain consistent even under noise. Assuming that small differences in likelihoods lead to small difference in the $Q$ function, we show that the beliefs can still be assembled in an \textit{approximate} capability type structure, yielding approximate versions of \Cref{thm:consistency}. All proofs are contained in the appendix.

\subsubsection{Ad Hoc Teamwork With Capability Types}

We now introduce our generic model of ad hoc teamwork. We assume that the game is ``solved'' when the types are known, that is, there exist a public collection of value functions $(V^C)_{C \in \mathcal{C}^N}$ ranging over all possible assignments of capabilities $C$. As with before, we can use the value function $V^C$ to define the $Q^C$ function and the one-step greedy policy $\Pi^C$. We say $Q^C$ is feasible for a player $i$ of type $c$ if $c' \le c$ for all $c' \in C$ (that is, player $i$ can only accurately unroll the policy when all the players in the joint policy are at most as capable as $i$), and restrict players to policies with feasible $Q^C$ functions.

To cooperate with stronger agents, we assume in this work that player $i$ substitutes in the strongest type available to it, namely, its own type $c$. Such a setting is natural when computing (or representing) the value function requires computational resources unavailable to the player of lower capabilities, and so players use a \textit{best-effort} approximation to players of greater capabilities. As stronger players are able to exactly simulate the approximations used by weaker players, our setting still allows the team to play a joint policy that is ``optimal'' for their assignment of types $C$, with the strongest player(s) playing $\Pi^C$ and all weaker players using their best-effort approximations. For a player a type $c$, we define the \textbf{predecessor set} as $p(c) := \{c' \mid c' \le c\}$, and denote the set of \textbf{feasible assignments} as $n(c) := p(c)^N$.

To incorporate beliefs into this framework, we adopt the QMDP approximation \citep{littman1995learning}. Fix a \textbf{generalized likelihood function} $\phi$ that takes a collection $B = (B^c_j)_{j \in \mathcal{N}}$ of type $c$ beliefs about players' types, and outputs a distribution over available assignments: $\phi_B(\cdot) \in \Delta(n(c))$. We define the $\phi$-value- and $\phi$-$Q$ functions as
\begin{align}
V^c_\phi(s, B) &:= \mathbb{E}_{C \sim \phi^c_B}\big[ V^C(s) \big], \\
Q^c_\phi(s, B, a) &:= \mathbb{E}_{r, s'} \big[ r + \gamma V^c_\phi (s', B) \mid a \big],
\end{align}
respectively. In our case, the $\phi$ value function can be interpreted as an estimate of the value of an action if all types were to be revealed before the next step. For simplicity we consider only the one-step greedy policy, but in general, exponentially more accurate estimates can be obtained by increasing the look-ahead depth \citep{Bellman:1957}. 

Player $i$ then selects the action that maximizes the $\phi$-$Q$ function of their type, yielding the following greedy policy:
\begin{align}
\label{eq:phi_policy}
\pi^i_\phi(s, B^i) := \argmax_{a \in \mathcal{A}} Q^{c_i}_\phi(s, B^i_{[i=c_i]}, a).
\end{align}

\subsubsection{Noise Model}

Our noise model allows (1) each player's observations of the state $s$ to be subject to independent noise, as well as (2) players to use different, \textit{private} versions of the $V^C$ functions, which captures many natural sources of noise: for instance, observations of the state $s$ may be subject to sensor noise, and allowing players to use private $V^C$ allows our results to apply to the more general setting of \textit{decentralized training and decentralized execution}, as players that independently solve for approximate value functions will differ by at most the sum of their approximation errors.

Our only assumption is that the deviations are bounded. 
Let $s_1, s_2$ and $V_1^C, V_2^C$ be the noisy states and private value functions of any two players, respectively. We assume that
\begin{align}
\label{eq:noise}
    |V_1^C(s_1) - V_2^C(s_2)| \le \epsilon
\end{align}
for a known constant $\epsilon \ge 0$. Note that the noise-free setting is the special case when $\epsilon = 0$, in which case our results from the previous section apply.

\subsubsection{Tempered Belief Updates}

The section provides a general framework for the setting where players' observations and value functions are subject to (bounded) noise. The main problem when moving from the noise-free setting is that the hard update in \Cref{eq:update_exact} is not robust to noise, and will in fact lead to incorrect inferences if applied directly. One method is to incorporate the noise explicitly in the Bayesian update, however this is impractical as the exact form again results in an infinite hierarchy of beliefs; furthermore, in order to perform a closed-form update we would require an assumption that the distribution of noise at the level of the value function is known, which is hard to guarantee in practice, particularly for decentralized training.

Instead, we propose a technique based on \textbf{tempered beliefs}, where the belief update is smoothed by a temperature parameter. Rather than maintain an explicit likelihood as in the noise-free case, it will be more convenient to use the beliefs to track a sufficient statistic. All beliefs are initialized to the all-zeros vector. After observing player $i$ take action $a$, player $j$ of type $c_j$ performs the update
\begin{align}
    B^j_i[c] \leftarrow B^j_i[c] + \ell_a(c)
\end{align}
for all $c \le c_j$, where $\ell$ is the \textit{loss}:
\begin{align}
\ell_a(c) := \max_{a' \in \mathcal{A}} Q_\phi^c(s, B^j_{[j=c]}, a') - Q_\phi^c(s, {B^j_{[j=c]}}, a).
\end{align}

Player $j$ computes the generalized likelihood at time $t$ as
\begin{align}
    P^j_i(c) :=& \frac{\exp(-B^j_i[c] / T(t))}{\sum_{c' < c_j} \exp(-B^j_i[c'] / T(t))},\\
    \phi^c_B(C) \propto& \prod_{i \in \mathcal{N}} P^j_i(C[i]),
\end{align}
where $T: t \rightarrow \mathbb{R}^+$ is the temperature, and executes the policy according to \Cref{eq:phi_policy}. Note that as $T \rightarrow 0$, $P^j_i(c)$ converges to 0 if capability $c$ ever experienced positive loss after observing an action by player $i$, which is equivalent to the hard update in \Cref{eq:update_exact} (save for a normalizing factor). Conversely, as $T \rightarrow \infty$, the updates become uninformative, and the likelihood converges to the initial uniform distribution. Hence, we choose the temperature to be as small as possible (to be informative) while providing the benefit of smoothing out the noise.

\subsubsection{Adversarial Noise}

If the noise is adversarial, then the only constraint on the noise is given by the $\epsilon$ bound in \Cref{eq:noise}. In this case, we set the temperature $T(t) = 6tN$, which yields the following generalized likelihood:
\begin{align}
    \phi^c_B(C) \propto \prod_{i \in \mathcal{N}} \exp(-B^j_i[c_i]/6tN)
\end{align}

\begin{theorem}[Consistency of belief updates, adversarial noise]
\label{thm:consistency_approx_adv}
Let $B^j_i$ and $B^k_i$ denote the type $c'$ beliefs held by players $j$ and $k$ about player $i$'s type. Then $|B^j_i[c] - B^k_i[c]| / t \le 3\epsilon$ for all $c \in \mathcal{C}$ and timesteps $t \ge 0$.
\end{theorem}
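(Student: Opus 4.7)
My plan is to prove the bound by induction on $t$, establishing the equivalent claim $|B^j_i[c](t) - B^k_i[c](t)| \le 3\epsilon t$ for every choice of players $i, j, k$ and capability $c \in \mathcal{C}$. The base case $t=0$ is immediate because all beliefs are initialized to the all-zero vector. For the inductive step, the tempered update $B[c] \leftarrow B[c] + \ell_a(c)$ is additive, so
\begin{equation*}
|B^j_i[c](t{+}1) - B^k_i[c](t{+}1)| \;\le\; |B^j_i[c](t) - B^k_i[c](t)| \,+\, |\ell^j_a(c) - \ell^k_a(c)|.
\end{equation*}
Here $\ell^j_a(c)$ and $\ell^k_a(c)$ denote the losses computed by the two players using their (noisy) state observations, private value functions, and current beliefs. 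The entire problem therefore reduces to showing $|\ell^j_a(c) - \ell^k_a(c)| \le 3\epsilon$ at every step.

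Since $\ell_a(c) = \max_{a'} Q^c_\phi(\cdot, a') - Q^c_\phi(\cdot, a)$ and the $\max$ is $1$-Lipschitz in its arguments, a per-action bound $|Q^{c,j}_\phi(s^j, B^j, a') - Q^{c,k}_\phi(s^k, B^k, a')| \le \tfrac{3}{2}\epsilon$ upgrades to $|\ell^j_a(c) - \ell^k_a(c)| \le 3\epsilon$ via the triangle inequality. I would decompose this $Q$-discrepancy into two contributions: (a) the direct noise between the evaluations $V^{C,j}(s^j)$ and $V^{C,k}(s^k)$, which is $\le \epsilon$ by the assumption in \Cref{eq:noise} and propagates through the Bellman identity $Q^c_\phi(s,B,a) = \mathbb{E}_{C \sim \phi^c_B}[Q^C(s,a)]$; and (b) the discrepancy induced by the two players averaging over different generalized likelihoods $\phi^{c,j}_{B^j}$ and $\phi^{c,k}_{B^k}$.

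The temperature schedule $T(t) = 6tN$ is tailored to control (b) by means of the inductive hypothesis. Since each per-player logit $-B^j_i[c]/T(t)$ differs from its counterpart for player $k$ by at most $3\epsilon t / (6tN) = \epsilon/(2N)$, summing over the $N$ factors in $\phi^c_B(C) \propto \prod_i \exp(-B_i[C[i]]/T(t))$ gives an unnormalized log-likelihood deviation of at most $\epsilon/2$ pointwise in $C$, and the partition function contributes at most a further $\epsilon/2$ because $\log\sum\exp$ is $1$-Lipschitz in its arguments. This yields the clean uniform bound $|\log \phi^{c,j}_{B^j}(C) - \log \phi^{c,k}_{B^k}(C)| \le \epsilon$, which combined with (a) should give the $\tfrac{3}{2}\epsilon$ bound on the $Q$-discrepancy and hence the $3\epsilon$ per-step loss bound, closing the induction.

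The step I expect to be the main obstacle is converting the pointwise log-likelihood deviation into a bound on the expectation $\mathbb{E}_{C \sim \phi^{c,j}_{B^j}}[Q^{C,k}(s^k,a)] - \mathbb{E}_{C \sim \phi^{c,k}_{B^k}}[Q^{C,k}(s^k,a)]$. A naive conversion to total variation and multiplication by $\|Q^C\|_\infty$ would reintroduce a factor of $V_{\max} = 1/(1-\gamma)$ that the $3\epsilon$ bound in the theorem statement does not allow. I anticipate closing this either by exploiting the multiplicative ($e^{\pm \epsilon}$) form of the $\phi$ ratio via a reweighting / change-of-measure argument that sees only the $O(\epsilon)$ log-ratios, or by centering $Q^{C,k}$ around its $\phi$-mean so that only fluctuations of the same order as the noise enter the residual bound; getting the constants to align with a clean $3\epsilon$ is the delicate accounting step.
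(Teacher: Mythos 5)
Your proposal follows essentially the same route as the paper: induction on $t$, with the additive update reducing everything to a per-step bound $|\ell^j_a(c) - \ell^k_a(c)| \le 3\epsilon$, which is in turn obtained by splitting the $Q$-discrepancy into the direct value-function noise (at most $\epsilon$, from the assumed bound on $|V^C_j(s_j)-V^C_k(s_k)|$) plus the discrepancy between the two players' generalized likelihoods (at most $\epsilon/2$ in total variation, extracted from the inductive hypothesis via the temperature $T(t)=6tN$, exactly your $\epsilon/(2N)$-per-logit accounting). The one step you flag as the ``main obstacle'' is not resolved in the paper by any change-of-measure or centering argument: the paper's lemma simply bounds $|\mathbb{E}_{C\sim\phi_j}[V^C] - \mathbb{E}_{C\sim\phi_k}[V^C]|$ by the total variation distance $\epsilon/2$, which is the ``naive conversion'' you were worried about and is valid only when the $V^C$ are treated as having range at most $1$; with rewards in $[0,1]$ and discounting, an honest constant would carry the factor $V_{\max}=1/(1-\gamma)$ you identify. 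So your concern points to a normalization convention left implicit in the paper (and a corresponding looseness in the stated constant) rather than to a missing idea in your argument; modulo that convention, your plan reproduces the paper's proof.
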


\subsubsection{Stochastic Noise}
\label{subsec:update_stoch}

For adversarial noise, we suffered a linear factor of $t$ in the generalized likelihood, essentially taking the average loss. Our next result provides a PAC guarantee for stochastic noise with a sublinear factor in the temperature. In particular, we will further assume that the observed error in the value function is sampled iid for each realization.

Let $\delta > 0$ be given. We choose the temperature $T(t) = \sqrt{d}t^{2/3}$, yielding the following generalized likelihood:
\begin{align}
    \phi^c_B(C) \propto \prod_{i \in \mathcal{N}} \exp(-B^j_i[c_i]/\sqrt{d}t^{2/3})
\end{align}
where $d$ is a constant that depends logarithmically on $1/\delta$.

\begin{theorem}[Consistency of belief updates, stochastic noise]
\label{thm:consistency_approx_stoch}
Let $B^j_i$ and $B^k_i$ denote the type $c'$ beliefs held by players $j$ and $k$ about player $i$'s type. Then with probability at least $1 - \delta$, $|B^j_i[c] - B^k_i[c]| / t^{2/3} \le \sqrt{d}\epsilon/2N$ for all $i,j,k \in \mathcal{N}$, $c \in \mathcal{C}$, and timesteps $t \ge 0$.
\end{theorem}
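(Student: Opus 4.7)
The plan is to parallel the proof of \Cref{thm:consistency_approx_adv} but replace the worst-case per-step bound with a martingale concentration argument that exploits the iid structure of the noise. Let $\Delta(t) := \max_{i,j,k,c} |B^j_i[c](t) - B^k_i[c](t)|$. Since the tempered updates are additive, $|B^j_i[c](t) - B^k_i[c](t)|$ telescopes into a sum of per-step increment differences $X_{ijk,c}(t') := \ell^{j,c}_{a(t')} - \ell^{k,c}_{a(t')}$ over the steps at which player $i$ acted. First I would decompose each $X_{ijk,c}(t')$ into a belief-sensitivity term and a pure-noise term by inserting the hybrid quantity $Q^{c,j}_\phi(s_j, B^k_{[i=c]}, \cdot)$, which uses player $j$'s noisy value function and state observation but player $k$'s beliefs.

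The belief-sensitivity term is bounded (deterministically in magnitude, though random in general) via the Lipschitz sensitivity of $\phi^c_B$: since $\phi^c_B$ is a product of softmaxes at temperature $T(t')$, if beliefs differ by at most $\Delta(t')$ coordinatewise then $\mathrm{TV}(\phi^c_{B^j}, \phi^c_{B^k}) \le \sum_\ell O(\Delta(t')/T(t'))$, giving a per-step belief-sensitivity bound of $O(N V_{\max} \Delta(t')/T(t'))$ with $V_{\max} \le 1/(1-\gamma)$. For the pure-noise term, under the iid hypothesis the pair $(V^{c,j}_\phi(\cdot, B^k), s_j)$ and $(V^{c,k}_\phi(\cdot, B^k), s_k)$ are exchangeable for fixed $B^k$ and for actions $a(t')$ chosen by any player other than $j$ or $k$, so both the max-over-actions and the at-$a(t')$ pieces of $\tilde\ell^j - \ell^k$ are mean-zero conditional on the history and bounded in magnitude by $O(\epsilon)$. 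Azuma–Hoeffding applied to this martingale difference sequence, together with a union bound over the $O(|\mathcal{C}| N^2)$ tuples $(i,j,k,c)$ and a dyadic peeling over timesteps, yields $|\sum_{t' < t} X_{\mathrm{noise}}(t')| \le O(\epsilon \sqrt{t \log(1/\delta)})$ simultaneously for all $t$ with probability at least $1-\delta$; the logarithmic overhead is absorbed into the constant $d$.

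Finally I would close the induction by choosing $T(t) = \sqrt{d} t^{2/3}$: plugging the candidate bound $\Delta(t') \le \sqrt{d}\epsilon(t')^{2/3}/(2N)$ into the belief-sensitivity estimate and summing, together with the $O(\epsilon\sqrt{dt})$ concentration piece, keeps $\Delta(t{+}1)$ below $\sqrt{d}\epsilon(t{+}1)^{2/3}/(2N)$ once $d$ is taken sufficiently large relative to $N$ and $V_{\max}$. The main obstacle is precisely this bootstrap: the belief-sensitivity cost at each step is itself driven by $\Delta(t')$, so the exponent $2/3$ emerges from balancing the $\sqrt{t}$ concentration rate against the belief-feedback growth, and uniformity across all $t \ge 0$ requires the dyadic peeling step inside the union bound. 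A minor separate concern is the case $j = i$ or $k = i$, for which $a(t')$ is not independent of the relevant noise channel; since the acting player's own-type belief is degenerate, this case is handled by treating $B^i_i$ as a fixed point mass at $c_i$ and restricting the nontrivial analysis to $j, k \ne i$.
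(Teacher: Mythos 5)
Your overall architecture (induction on $t$, a per-step sensitivity bound via the softmax/total-variation lemma, concentration for the noise, and a union bound over $(i,j,k,c)$ and over timesteps) matches the paper's proof in spirit, and your pure-noise concentration component and the summability of $\exp(-\Theta(t^{1/3}))$ over all $t$ are consistent with what the paper does. The genuine gap is in how you close the bootstrap on the belief-sensitivity term. You bound that term \emph{deterministically} per step by $O(N V_{\max}\Delta(t')/T(t'))$ and then sum over $t' < t$. Under the inductive hypothesis $\Delta(t') \le \sqrt{d}\,\epsilon (t')^{2/3}/2N$ with $T(t') = \sqrt{d}\,(t')^{2/3}$, each summand is $\Theta(V_{\max}\epsilon)$ --- the factor $\sqrt{d}\,(t')^{2/3}$ cancels between numerator and denominator --- so the accumulated drift is $\Theta(V_{\max}\epsilon\, t)$, which exceeds the target $\sqrt{d}\,\epsilon t^{2/3}/2N$ for all large $t$ no matter how large you take $d$. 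Taking ``$d$ sufficiently large relative to $N$ and $V_{\max}$'' does not help precisely because $d$ appears in both $\Delta$ and $T$; this is not a constant-chasing issue but a wrong order of growth, and the induction as you describe it does not close.

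The paper avoids this by never treating the belief-sensitivity part as a one-sided drift. It centers each $B^j_i[c]$ at the \emph{common unconditional expectation} $B_i[c] := \mathbb{E}[B^j_i[c]]$, which is the same for every observer $j$ by the symmetry of the iid noise assumption, so the belief-divergence effect has zero mean over the full randomness and contributes only to the \emph{range} of each increment (the lemma's $3\epsilon$ bound), which is what enters Hoeffding. Concentration at threshold $\sqrt{d}\,\epsilon t^{2/3}/4N$ with increment range $3\epsilon$ gives failure probability $2\exp(-d\, t^{1/3}/72N^2)$, and the triangle inequality converts closeness-to-the-common-mean into closeness between any pair $j,k$. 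To repair your argument you would need either to adopt this centering, or to show that your belief-sensitivity term is itself mean-zero rather than merely bounded: its sign is determined by which player's accumulated noise happens to be larger, and it is only by exploiting that symmetry that the linear accumulation disappears. Your side remarks (dyadic peeling, the $j=i$ case) are harmless but unnecessary; a direct union bound over all $t$ suffices because $\sum_t \exp(-d' t^{1/3})$ converges, which is exactly why the exponent $2/3$ (rather than $1/2$) is chosen.
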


\section{Experimental Results}

We present results for several experiments using players who cooperate using a depth-bounded online tree search. Specifically, a player of capability $d$ seeks to maximize the joint rewards over the next $d$ steps. Each task consists of two players that alternate taking actions. One player is an \textit{expert}, with a deeper search depth, and the other player is a \textit{novice}, with a shallower search depth. The players are otherwise identical (i.e., have the same action space, share the same reward function, use the same search hyperparameters, and have complete knowledge of the environment as well as the actions taken by the other player).

To plan, agents use the Monte-Carlo Tree Search (MCTS) algorithm on the depth-bounded tree. Because the value estimates at the leaves depend on a Monte Carlo estimate, we use the stochastic version of the tempered beliefs as described in \Cref{subsec:update_stoch} for all the experiments. Our appendix contains a more detailed description of how the MCTS policies and belief updates are defined.

\subsection{Toy Environments}

We first present results in two toy environments to demonstrate the effects of capability-awareness in a cooperative setting. In the \textbf{Wall of Fire} task, two players take turns controlling a single avatar. The avatar can move in one of the four cardinal directions. After each move, the players collect a reward based on their position: the red ``fire'' tiles yield a penalty of \mbox{-2}, whereas the yellow ``coin'' tiles yield a reward of \mbox{+100}. Each coin can only be collected once. We run each episode for 20 turns (10 per player). We use a novice of depth 2, and an expert of depth 20. \Cref{fig:wall_of_fire} displays the initial state of the players and the board.  \Cref{tab:wall_of_fire} reports the performance of various team compositions.

\begin{figure}
    \centering
    \includegraphics[width=.5\columnwidth]{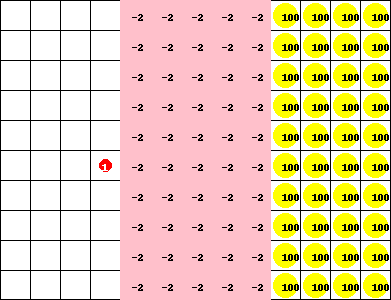}
    \caption{The initial state for the Wall of Fire task.}
    \label{fig:wall_of_fire}
\end{figure}

The Expert + Expert team plays the optimal strategy, which traverses the wall of fire in 5 steps then spends the remaining steps collecting coins. Conversely, the novice team is unable to ``see'' past the wall of fire, and remains in the neutral area to the left of the wall of fire. However, when the oblivious expert is paired with the novice, the two players take turns moving into and out of the fire for the entirety of the episode, leading to a large negative reward (Expert + Novice, \Cref{tab:wall_of_fire}). In contrast, the capability-aware expert is able to infer from a single interaction that the novice is unable to see past the wall of fire, and only collects the penalty once before cooperating with the novice to ``explore'' the neutral left side of the board (CA-Expert + Novice, \Cref{tab:wall_of_fire}).

\begin{table}[]
    \centering
    \begin{tabular}{c|c}
    \toprule
         Team Composition & Reward  \\
    \midrule
         Novice + Novice & 0 \\
         Expert + Expert & 1490 \\
    \midrule
         Expert + Novice & -20 \\
         CA-Expert + Novice & -2 \\
    \bottomrule
    \end{tabular}
    \caption{Performance on Wall of Fire task for different teams. Results are the median of 5 runs.}
    \label{tab:wall_of_fire}
\end{table}

In the \textbf{Narrow Tunnel} task, two players each control an avatar on a two-dimensional board. The action space consists of either not moving, or moving in a cardinal direction. The red coins are worth \mbox{+30} if collected by the red avatar, and the blue coins are worth \mbox{+1} if collected by by the blue avatar; the coins are worthless otherwise, and disappear regardless of which avatar collects them. We run each episode for 20 turns. We use a novice of depth 10, and an expert of depth 30. \Cref{fig:narrow_tunnel} displays the initial state of the players and the board. \Cref{tab:narrow_tunnel} reports the performance of various team compositions on the task.

\begin{figure}
    \centering
    \includegraphics[width=.5\columnwidth]{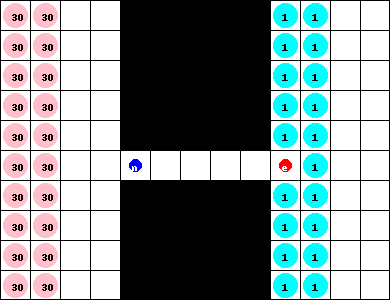}
    \caption{The initial state for the Narrow Tunnel task. In this case, the novice controls the blue avatar and the expert controls the red avatar.}
    \label{fig:narrow_tunnel}
\end{figure}

Due to the uneven rewards, the optimal strategy is for the blue avatar to yield the narrow tunnel to the red avatar, allowing it to collect the more valuable red coins (Expert + Expert, \Cref{tab:narrow_tunnel}). Conversely, the red rewards are out of reach for a novice player, so the novice team instead sends the blue avatar through the narrow tunnel (Novice + Novice, \Cref{tab:narrow_tunnel}). However, when an oblivious expert controls the red avatar, and the novice controls the blue avatar, the two players meet in a deadlock in the center of the tunnel, neither willing to yield to the other (Novice + Expert, \Cref{tab:narrow_tunnel}). An expert running the capability-aware algorithm takes only 1 turn of deadlock to infer that the novice's depth is insufficient to use the optimal strategy, and hence yields the tunnel to the blue player (Novice + CA-Expert, \Cref{tab:narrow_tunnel}).

\begin{table}[]
    \centering
    \begin{tabular}{c|c|c}
    \toprule
         Blue Avatar & Red Avatar & Reward  \\
    \midrule
         Novice & Novice & 4 \\
         Expert & Expert & 90 \\
    \midrule
         Novice & Expert & 0 \\
         Novice & CA-Expert & 4 \\
    \bottomrule
    \end{tabular}
    \caption{Performance on Narrow Tunnel task for different teams. Results are the median of 5 runs.}
    \label{tab:narrow_tunnel}
\end{table}

\subsection{Cooperative Checkers}

We next report results for a cooperative version of checkers, where players on the same team take turns moving pieces. Cooperative checkers presents a challenging setting to study ad hoc teamwork due to the complexity of the underlying game. For instance, \citet{schaeffer2007game} reports that an average game of checkers lasts for around 50 turns, with an average branching factor of around 6; the game-tree complexity is thus around $10^{40}$ (compared to $10^{83}$ for chess).

Each multi-agent team consist of a single expert paired with a single novice, with capabilities varying over all distinct pairs from the set $\{2, 4, 6, 8\}$. Players alternate selecting actions when it is the team's turn to move. The game ends when either a team has no moves or no pieces left, in which case the opponent wins. We also terminate a game after 120 total moves, or after 40 moves without any rewards, and declare the team with the highest cumulative rewards as the winner. Players plan without knowledge of the termination conditions to prevent stalling. The appendix contains further details about the set up and additional experimental results.

\subsubsection{Capability-Aware Versus Oblivious Cooperation}

For the first set of experiments, we play two teams against each other: capability-aware (CA) and oblivious (OBL). Both teams consist of an expert and a novice, with the experts and novices on either team having the same depths. The expert on team CA is capability-aware, while the expert on team OBL is oblivious; both novices are oblivious. We played 20 games with different random seeds for each assignment of colors to teams and player order within each team, for a total of 160 games per combination of capabilities. We report the aggregate \textbf{score}, defined as (\#wins - \#losses)/\#games, with a higher score indicating stronger performance.

\begin{table}[]
    \centering
    \begin{tabular}{c|c|c}
    \toprule
        $\Delta$ & \#Runs & Score (\%)  \\
    \midrule
         2 & 480 & 1.0 \\
         4 & 320 & 6.4 \\
         6 & 160 & 4.4 \\
    \midrule
         total & 960 & 3.3 \\
    \bottomrule
    \end{tabular}
    \caption{Performance of a CA team against a OBL team, scored from the perspective of the CA team (a positive score means CA is stronger than OBL). $\Delta$ denotes the difference between the expert and novice capabilities.}
    \label{tab:checkers_vs}
\end{table}

\Cref{tab:checkers_vs} displays the results. Team CA has a consistent advantage over team OBL, which tends to be larger when the difference in capabilities $\Delta$ is larger. This is consistent with the expectation that the oblivious expert mispredicts the novice's policy more frequently as $\Delta$ increases.

\subsubsection{Multi-Agent Capability-Aware Cooperation}
\label{sec:results:ma_vs_sa}

We also compared the performance of a team running the multi-agent capability inference algorithm (MA) against a team consisting of two adaptive players running a single-agent inference algorithm (SA). In the former case, each player maintains a capability type structure to model the teammate's beliefs and ensure consistent belief updates. In the latter case, each player tries to infer the capabilities of its teammate, but models the teammate as playing obliviously. While this simplifies the update rule by ignoring the non-stationarity of the teammate's behavior, our main hypothesis is that such an approximation leads to worse performance before the beliefs have converged the true capabilities.

Teams play against a single player opponent whose capability is equal to either the novice or the expert on the team. We ran 50 games with different random seeds for each assignment of colors to teams and player order within the team, for a total of 200 games per combination of capabilities and opponent. In addition to the score, we also report the \textbf{deviation} $d$ of the final posterior $p$ with respect to the true depth $d^*$ of the teammate, defined as $\sqrt{\sum_d p(d)(d - d^*)^2}$. In the case of the novice, we set $d^*$ to the best effort approximation (i.e., the novice's own depth). The ``true'' posterior placing 100\% density on $d^*$ achieves an optimal deviation of 0.

\begin{table}[]
    \centering
    \begin{tabular}{c|c|c|c|c|c}
    \toprule
        Opponent & Team & \#Runs & Score (\%) & $d_{exp}$ & $d_{nov}$\\
    \midrule
         \multirow{2}{*}{novice} & MA & 600 & 27.0 & 1.1 & 0.8\\
         & SA & 600 & 21.0 & 1.2 & 0.9 \\
    \midrule
         \multirow{2}{*}{expert} & MA & 600 & -19.7 & 1.2 & 0.8 \\
          & SA & 600 & -21.8 & 1.3 & 0.9 \\
    \midrule
    \midrule
         \multirow{2}{*}{total} & MA & 1200 & 3.7 & 1.1 & 0.8 \\
          & SA & 1200 & -0.4 & 1.2 & 0.9 \\
    \bottomrule
    \end{tabular}
    \caption{Performance of multi-agent (MA) and single-agent (SA) adaptive teams, against an opponent of either novice or expert depth. Scores are from the perspective of the team (a positive score means that the team is stronger than the single opponent). $d_{nov}$ and $d_{exp}$ denote the deviations of the final posteriors of the novice and expert, respectively.}
    \label{tab:checkers_ma}
\end{table}

\Cref{tab:checkers_ma} displays the results. The MA team running the correct multi-agent update consistently outperforms the SA team running the single-agent update, which is particularly noticeable when playing against the novice. We also note that the final posteriors are fairly accurate, even when using the single-agent update. Hence, we attribute the difference in performance to diverging posteriors earlier in the game: the posteriors may grow accurate even as the game is already lost. These results indicate that our multi-agent framework based on capability type structures yields fast convergence of the posteriors to the true values, which is critical to achieving good performance compared the naive single-agent update.
\section{Conclusion}

This work studies ad hoc teamwork amongst agents of different computational capabilities. Our main contribution is a framework based on capability type structures, which enables multiple agents of different computational capabilities to adapt their behavior without explicit coordination. Moreover, by using tempered beliefs our techniques can also be applied when the agents' observations and value estimates are subject to noise. Our experiments indicate that our capability-aware algorithm leads to improved performance for teams of heterogeneous capabilities in several toy settings, as well as a more complex checkers environment.

\bibliography{main}

\onecolumn
\appendix
\section{Deferred Proofs}

\begin{proof}[Proof of \Cref{thm:consistency}]
Let $i$ be the current actor, and let $j$ and $k$ be any two players with capability types $c_j, c_k$, respectively.

Fix a type $c \le c_j$ and $c \le c_k$. First, we show that players $j$ and $k$ compute the same value for $v^*(c)$, which is the optimal value at capability $c$. Denote by $v^j(c)$ and $v^k(c)$ the optimal values at capability $c$ computed by players $j$ and $k$, respectively. By assumption, $B_j(t)$ is a capability type structure for all $j \in \mathcal{N}$, so (by a slight abuse of notation) we have $B^j_{[i=c]} \downarrow_c = B^k_{[i=c]} \downarrow_c$. As $Q^c$ and $s$ are public knowledge,
\begin{align}
v^j(c) &= \max_{a' \in \mathcal{A}} Q^c(s, B^j_{[j=c]} \downarrow_c, a') \\
&= \max_{a' \in \mathcal{A}} Q^c(s, B^k_{[j=c]} \downarrow_c, a') \\
&= v^k(c).
\end{align}

Similar, let $A^j(c)$ and $A^k(c)$ be the optimal actions $A^*(c)$ as computed by players $j$ and $k$, respectively. Then
\begin{align}
A^j(c) &= \{a \in \mathcal{A} | Q^c(s, B^j_{[j=c]}\downarrow_c, a) = v^j(c)\} \\
&= \{a \in \mathcal{A} | Q^c(s, B^k_{[j=c]}\downarrow_c, a) = v^k(c)\} \\
&= A^k(c).
\end{align}

Again by the fact that $B_j$ forms a capability type structure, $B^j_i[c] = B^k_i[c]$, and
\begin{align}
B^j_i[c] * \mathbbm{1}_{a \in A^j(c)} / |A^j(c)| = B^k_i[c] * \mathbbm{1}_{a \in A^k(c)} / |A^k(c)|.
\end{align}

Since this is true for all $i,j \in \mathcal{N}$ and $c \in \mathcal{N}$ such that $c \le c_i$ and $c \le c_j$, we conclude that $B_j(t{+}1)$ is a capability type structure.

\end{proof}

\begin{proof}[Proof of \Cref{thm:correctness}]
This result follows directly from the fact that the belief update term $\mathbbm{1}_{a \in A^*(c')} / |A^*(c')|$ is exactly the likelihood $P(a|c')$. More explicitly, let $a(t_1), a(t_2), \ldots, a(t_T)$ be the sequence of actions played by player $j$ thus far. By Bayes' theorem,
\begin{align}
    B^j_i[c'] &= \frac{\mathbbm{1}_{a(t_1) \in A_1^*(c')}}{|A_1^*(c')|} \frac{\mathbbm{1}_{a(t_2) \in A_2^*(c')}}{|A_2^*(c')|} \cdots \frac{\mathbbm{1}_{a(t_T) \in A_T^*(c')}}{|A_T^*(c')|} \\
    &= P(a(t_1)|c_j = c') * P(a(t_2)|c_j = c') * \cdots * P(a(t_T)|c_j = c') \\
    &\propto P(c_j = c'|a(t_1), a(t_2), \ldots, a(t_T)).
\end{align}
where $A^*_n$ is the set of optimal actions at capability $c$ for player $i$ when it played action $a(t_n)$, and we have hidden the dependence on the history of state observations.

Next, we consider the denominator of $P^c(c')$. In particular, if $\sum_{c'' \le c} B_j[c''] = 0$, then $P^c$ is undefined, but this also means that $ P(c_j = c''|a(t_1), a(t_2), \ldots, a(t_T)) = 0$ for all $c'' \le c$, i.e., $c_j > c$ as claimed. On the other hand, if $\sum_{c'' \le c} B_j[c''] > 0$, then this computes the normalization factor for $P(\cdot | c_j \le c)$; hence $P^c(c') = P(c_j = c' | c_j \le c, a(t_1), a(t_2), \ldots, a(t_T))$, which completes the proof.
\end{proof}

Before we prove the approximate consistency results, we first state and prove a useful lemma:
\begin{lemma}
Let $i$ be the acting player, denote their action by $a$, and let $B^j$ and $B^k$ the denote the type $c$ beliefs of any two players $j$ and $k$ with type at least $c$. Denote by $\ell_a^j(c)$ and $\ell_a^k(c)$ the losses as computed by $j$ and $k$. If $|B^j_m[c'] - B^k_m[c']| / T(t) \le \epsilon / 2N$ for all players $m$ and all capabilities $c' \le c$, then $|\ell_a^j(c) - \ell_a^k(c)| \le 3\epsilon$.
\end{lemma}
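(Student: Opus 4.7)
The plan is to reduce the loss-difference bound to a uniform pointwise bound on the $Q^c_\phi$ differences across actions. Since $\ell_a(c) = \max_{a'} Q^c_\phi(\cdot, a') - Q^c_\phi(\cdot, a)$ and $\max$ is $1$-Lipschitz in sup norm, a uniform bound $|Q^{c,j}_\phi(s^j, B^j, a') - Q^{c,k}_\phi(s^k, B^k, a')| \le \tfrac{3}{2}\epsilon$ combined with the triangle inequality (applied once at the argmax and once at $a$) yields $|\ell_a^j(c) - \ell_a^k(c)| \le 3\epsilon$.

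The first step is a pointwise multiplicative bound between the two generalized likelihoods. Writing $\phi^c_B(C) \propto \prod_{i \in \mathcal{N}} \exp(-B_i[C[i]]/T(t))$, the hypothesis $|B^j_m[c'] - B^k_m[c']|/T(t) \le \epsilon/2N$ implies that each of the $N$ unnormalized exponential factors has ratio in $[\exp(-\epsilon/2N), \exp(\epsilon/2N)]$, so the product lies in $[\exp(-\epsilon/2), \exp(\epsilon/2)]$. An identical calculation applied to the partition functions contributes another factor in the same interval, giving $\phi^c_{B^j}(C)/\phi^c_{B^k}(C) \in [e^{-\epsilon}, e^\epsilon]$ pointwise and hence $\|\phi^c_{B^j} - \phi^c_{B^k}\|_1 \le e^\epsilon - 1$.

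The second step transports this likelihood bound to the value functions through the decomposition
\begin{align*}
V^{c,j}_\phi(s^j, B^j) - V^{c,k}_\phi(s^k, B^k) &= \mathbb{E}_{C \sim \phi^c_{B^j}}\bigl[V^C_j(s^j) - V^C_k(s^k)\bigr] \\
&\quad + \bigl(\mathbb{E}_{\phi^c_{B^j}} - \mathbb{E}_{\phi^c_{B^k}}\bigr)\bigl[V^C_k(s^k)\bigr].
\end{align*}
The first summand is bounded by $\epsilon$ directly from the noise assumption \eqref{eq:noise}. For the second, I center $V^C_k(s^k)$ by the $C$-independent constant $\mathbb{E}_{\phi^c_{B^k}}[V^C_k(s^k)]$ (which does not change the difference, since $\phi^c_{B^j} - \phi^c_{B^k}$ integrates to zero) and then pair the total-variation bound from Step one with a bound on the centered range of $V^C_k$ across $C$. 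The result is lifted from $V^c_\phi$ to $Q^c_\phi$ via $Q^c_\phi(s, B, a) = \mathbb{E}_{r, s'}[r + \gamma V^c_\phi(s', B) \mid a]$ with one further invocation of \eqref{eq:noise} at the successor state.

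The principal obstacle is the centered-spread bound in the second summand: the noise hypothesis \eqref{eq:noise} controls only cross-player value deviations at a fixed $C$, not the cross-assignment spread of a single player's value function. The plan is to handle this via the implicit magnitude control on $V^C$ furnished by the bounded-reward assumption together with the discount factor, normalized so that the scaling is absorbed into the same $\epsilon$. Modulo this bookkeeping, the final constant $3$ assembles cleanly from three $\epsilon$-sized contributions: the direct cross-player $V$-noise in the first summand, the likelihood-difference term weighted by the $V$-spread in the second summand, and the noise picked up at the successor state when lifting $V^c_\phi$ to $Q^c_\phi$.
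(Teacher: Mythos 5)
Your proposal follows essentially the same route as the paper's proof: establish closeness of the generalized likelihoods from the hypothesis on the beliefs, decompose the difference of the $\phi$-value functions into a cross-player noise term (bounded by $\epsilon$ via \eqref{eq:noise}) plus a distribution-shift term (bounded by the likelihood closeness times the spread of $V^C$ over $C$), lift to the $\phi$-$Q$ functions, and double the bound because the loss is a difference of two $Q$ values. The one genuine variation is your first step: you bound the ratio $\phi^c_{B^j}(C)/\phi^c_{B^k}(C)$ pointwise on the product measure, obtaining $\|\phi^c_{B^j}-\phi^c_{B^k}\|_1 \le e^\epsilon - 1$, whereas the paper bounds each marginal $P_m$ in total variation by $\epsilon/2N$ and composes the $N$ factors to get total variation $\epsilon/2$; the two agree to first order in $\epsilon$, and your version is arguably cleaner since it avoids the paper's unjustified ``$\approx$'' step. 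Two caveats. First, your closing accounting is inconsistent with your opening one: the ``noise picked up at the successor state'' is not a third $\epsilon$-sized contribution --- it is just the first summand of your decomposition evaluated at $s'$ rather than $s$. The correct tally is $\epsilon + \epsilon/2$ per $Q$ value, times two because the loss is a difference of two $Q$ values, giving $3\epsilon$; counting a separate successor-state $\epsilon$ on top would yield $5\epsilon$ and the lemma would fail. Second, the ``centered-spread'' obstacle you flag is real, but you should know the paper's own proof has exactly the same gap: it passes from $d_{TV}(\phi^c_{B^j},\phi^c_{B^k}) \le \epsilon/2$ directly to $|\mathbb{E}_{\phi^c_{B^j}}[V_j^C] - \mathbb{E}_{\phi^c_{B^k}}[V_j^C]| \le \epsilon/2$, which silently assumes the oscillation of $C \mapsto V_j^C$ is at most $1$; with rewards in $[0,1]$ the honest bound on that oscillation is $1/(1-\gamma)$, so a normalization or an explicit boundedness assumption on the $V^C$ is needed in both your argument and the paper's.
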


\begin{proof}
First we show that $P^j_m$ and $P^k_m$ are close in total variation distance for arbitrary $m \in \mathcal{N}$. Fix $c' \le c$ and assume without loss of generality that $P^k_m(c') > P^j_m(c')$. Then
\begin{align}
P^k_m(c') - P^j_m(c') &= \frac{\exp(-B^j_m[c'] / T(t)}{\sum_{c'} \exp(-B^j_m[c']/ T(t))} - \frac{\exp(-B^k_m[c']/ T(t))}{\sum_{c'} \exp(-B^k_m[c']/ T(t))} \\
&\le \frac{\exp(-B^j_m[c']/ T(t))}{\sum_{c'} \exp(-B^j_m[c']/ T(t))} - \frac{\exp((-B^j_m[c'] - \epsilon/2T(t)N)/ T(t))}{\sum_{c''} \exp((-B^j_m[c''] + \epsilon/2T(t)N)/ T(t))} \\
&= (1 - e^{-\epsilon/N})\frac{\exp(-B^j_m[c']/ T(t))}{\sum_{c'} \exp(-B^j_m[c']/ T(t))} \\
&= (1 - e^{-\epsilon/N}) P^k_m(c') \\
&\approx  \epsilon P^k_m(c')/N 
\end{align}

Since this holds for all $c' \le c$, it follows that the total variation distance between $P_m^j$ and $P_m^k$ is at most
\begin{align}
d_{TV}(P^j_m, P^k_m) &= \sum_{c' \le c} |P^j_m(c') - P^k_m(c')|/2 \\
&\le \sum_{c' \le c} \epsilon P^k_m(c') /2N \\
&= \epsilon/2N
\end{align}

Denote by $\phi^c_{B^j}(C)$ and $\phi^c_{B^k}(C)$ the generalized likelihoods as computed by players $j$ and $k$, respectively. As each likelihood is a product of $N$ distributions $\prod_{m \in \mathcal{N}} P^j_m$ and $\prod_{m \in \mathcal{N}} P^k_m$, respectively, each bounded by in total variation distance by $\epsilon/2N$, we have that
\begin{align}
    d_{TV}(\phi^c_{B^j}(C), \phi^c_{B^k}(C)) \le \epsilon/2
\end{align}

Let $V^c_{\phi,j}(s, B)$ and $V^c_{\phi,k}(s, B)$ be the resulting $\phi$ value functions for players $j$ and $k$, respectively. Then
\begin{align}
    |V^c_{\phi,j}(s, B) - V^c_{\phi,k}(s, B)| &= |\mathbb{E}_{C \sim \phi^c_{B_j}}\big[ V_j^C(s_j) \big] - \mathbb{E}_{C \sim \phi^c_{B_k}}\big[ V_k^C(s_k) \big]| \\
    &\le \epsilon/2 + |\mathbb{E}_{C \sim \phi^c_{B_k}}\big[ V_j^C(s_j) \big] - \mathbb{E}_{C \sim \phi^c_{B_k}}\big[ V_k^C(s_k) \big]| \\
    &= \epsilon/2 + \mathbb{E}_{C \sim \phi^c_{B_k}}\big[ |V_j^C(s_j) -  V_k^C(s_k) |\big] \\
    &\le 3\epsilon/2
\end{align}

It follows then that the $\phi$-Q functions also differ by at most $3\epsilon/2$, and hence $|\ell_a^j(c) - \ell_a^k(c)| \le 3\epsilon$.
    
\end{proof}

\begin{proof}[Proof of \Cref{thm:consistency_approx_adv}]
\label{appendix:proof:consistency_approx_adv}
Let $B^j_i$ and $B^k_i$ be the type $c'$ beliefs of any two players $j$ and $k$. We will prove by induction that $|B^j_i[c] - B^k_i[c]| / t \le 3\epsilon$ for all $c \le c'$, where $t$ is the number of updates and $\epsilon$ is the error bound between private value functions. 

As the beliefs are initialized to the same values, the base case $t = 0$ is trivially true. For the inductive step, by assumption, $|B^j_i[c] - B^k_i[c]| / t \le 3\epsilon$, i.e., $|B^j_i[c] - B^k_i[c]| / T(t) \le \epsilon/2N$. Directly applying the lemma, we see that their estimates of the losses differ by at most $3\epsilon$. This completes the induction, as the beliefs accumulate the losses.

\end{proof}

\begin{proof}[Proof of \Cref{thm:consistency_approx_stoch}]

Let $B^j_i$ and $B^k_i$ be the type $c'$ beliefs of any two players $j$ and $k$.
Observe that $B^j_i[c]$ and $B^k_i[c]$ are random variables such that $\mathbb{E}[B^j_i[c]] = \mathbb{E}[B^k_i[c]]$ for all $j$ and $k$ due the assumption of independence of noise. Denote this expectation by $B_i[c]$. We will show that, with probability at least $1-\delta$, 
\begin{align}
\label{app:eq:stochastic_bound}
|B^j_i[c] - B_i[c]| / \sqrt{d}t^{2/3} \le \epsilon / 4N
\end{align}
for all $i, j \in \mathcal{N}$, $c \in \mathcal{C}$ and time steps $t$. It follows then that, with probability at least $1-\delta$, $|B^j_i[c] - B^k_i[c]| / \sqrt{d}t^{2/3} \le \epsilon / 2N$ for all $i, j, k \in \mathcal{N}$, $c \in \mathcal{C}$ and time steps $t$, which completes the proof.

We will proceed by inducting on $t$. As all beliefs are initialized to zero, the base case $t = 0$ holds. Hence assume that  $|B^j_i[c] - B^k_i[c]| / \sqrt{d}t^{2/3} \le \epsilon / 2N$, or equivalently, $|B^j_i[c] - B^k_i[c]| / T(t) \le \epsilon / 2N$. By the lemma, the losses differ from the true loss by at most $3\epsilon$. By Hoeffding's inequality
\begin{align}
P(|B^j_i[c] - B_i[c]| \ge \sqrt{d}\epsilon t^{2/3}/4N) &\le 2 \exp\big(-\frac{2d\epsilon^2t^{4/3}}{(4N)^2t(3\epsilon)^2}\big) \\
&= 2\exp\big(-d't^{1/3}\big)
\end{align}
where $d' = d/72N^2$ and $t$ is the number of loss estimates in the sum $B^j_i[c]$. There are $N^2$ beliefs $B^j_i$ (one for each $i,j \in \mathcal{N}$), each of which is has support at most $c_{max} := \max_{i \in \mathcal{N}} |p(c_i)|$. Hence, in order to apply the union bound, we need
\begin{align}
\sum_{n=1}^\infty 2\exp(-d't^{1/3}) \le \delta / N^2c_{max}.
\end{align}

As the corresponding integral converges, we find that taking $d' = \ln(20N^2c_{max}/9\delta)$ suffices.
We conclude by the union bound that, with probability at least $1-\delta$, \Cref{app:eq:stochastic_bound} holds for all agents $i,j \in \mathcal{N}$, all $c \in \mathcal{C}$, and all timesteps $t$. 

\end{proof}
\section{Capability-Aware Monte-Carlo Tree Search}
\label{appendix:trees}

The Monte-Carlo Tree Search (MCTS) algorithm has seen a surge in popularity for games and planning in recent years since forming an integral part of AlphaGo \citep{silver2016mastering}.  This section provides a brief description of the MCTS algorithm. We next present a modification of MCTS that accommodates agents that are bounded in how deep into the game tree they are able to explore. We then propose a capability-aware MCTS algorithm that adjusts for the capabilities of teammates. Finally, we describe how we perform inference in the capability-aware MCTS algorithm.

\subsection{Monte-Carlo Tree Search}

In this section, we provide a brief introduction to the MCTS algorithm. For a more comprehensive survey, we refer the reader to \citep{browne2012survey}. 

\Cref{alg:mct} provides an overview of the main steps in MCTS. The search is initialized with the tree consisting of a single root node $\mathcal{T} = s(0)$ consisting of the current state. At each step, we $\textbf{Select}$ a path to a leaf node $s$, $\textbf{Expand}$ the leaf node by adding all legal transitions as children, $\textbf{Simulate}$ an initial value estimate of $s$, then $\textbf{Backpropogate}$ the estimated value of $s$ to the root, using the recursive formula to update the value of its predecessors. In particular, $\textbf{Simulate}$ performs $m$ simulations of random actions down to a maximum depth of $d$ (or until the game terminates), and takes the average cumulative reward to be the initial value estimate. $\textbf{Select}$ is subject to the usual explore-exploit tradeoff (exploration prefers nodes which have fewer visits; exploitation prefers nodes with higher estimated values); the popular Upper Confidence Trees (UCT) variant \citep{kocsis2006bandit} treats each choice as a multi-armed bandit problem, using the celebrated UCB1 method \citep{auer2002finite} to achieve good theoretical guarantees (and excellent performance in practice). 

\begin{algorithm}
\caption{Monte-Carlo Tree Search (MCTS)}
\label{alg:mct}
\begin{algorithmic}[1]

\Procedure{MCTS}{$\mathcal{T}, n, d, m$}
\For {$i = 1, ..., n$}
    \State $s \gets \textbf{Select}(\mathcal{T})$
    \State $\textbf{Expand}(s)$
    \State $\textbf{Simulate}(s, d, m)$
    \State $\textbf{Backpropogate}(\mathcal{T}, s)$
\EndFor
\EndProcedure
\end{algorithmic}
\end{algorithm}

\subsection{Depth-Bounded MCTS}

To adjust the core MCTS algorithm for agents with bounded search depth $d$, we add a restriction that \textbf{Select} terminates the search at either a leaf, or when the path as reached the maximum depth $d$. As the same parameter $d$ also applies to the simulation depth, an agent of depth $d$ is able to explicit store a tree of depth $d$, and can access the next $d$ levels via Monte Carlo simulations. The updated algorithm is provided in \Cref{alg:mct_depth}; note that the only difference compared to \Cref{alg:mct} is the parameter $d$ in \textbf{Select}.

\begin{algorithm}
\caption{Depth-Bounded MCTS}
\label{alg:mct_depth}
\begin{algorithmic}[1]

\Procedure{BoundedMCTS}{$\mathcal{T}, n, d, m$}
\For {$i = 1, ..., n$}
    \State $s \gets \textbf{Select}(\mathcal{T}, d)$
    \State $\textbf{Expand}(s)$
    \State $\textbf{Simulate}(s, d, m)$
    \State $\textbf{Backpropogate}(\mathcal{T}, s)$
\EndFor
\EndProcedure
\end{algorithmic}
\end{algorithm}

Finally, we assume that agents search through the depths of the tree progressively. This additional structure simplifies the inference and empirically does not significantly affect the strength of the agent in our setting. \Cref{alg:mct_agent_obl} presents the algorithm an oblivious agent with capability type $d$ uses to perform the search. Note that the agent also spends an increasing amount of time searching at each level.

\begin{algorithm}
\caption{Depth-Bounded MCTS Agent (Oblivious)}
\label{alg:mct_agent_obl}
\begin{algorithmic}[1]

\Procedure{ObliviousSearch}{$\mathcal{T}, n, d, m$}
\For {$i = 1, ..., d$}
    \State \textsc{BoundedMCTS}($\mathcal{T}, n*i, i, m$)
\EndFor
\EndProcedure
\end{algorithmic}
\end{algorithm}

\subsection{Capability-Aware MCTS}

In general, we can separate the nodes of game tree into two categories: those that represent actions by the player's own team, and those that are played by the opponent. The standard technique for handling the opponent's nodes is perform a min-max search, where the opponent is presumed to play the action with the worst value (from the player's perspective). Conversely, for nodes played by the player's team, the naive strategy (adopted by the oblivious player) is to assume that the teammate will select the best action. However if the node is actually to be played by a teammate with different capabilities, then they may disagree with your assessment of which action to play, causing incorrect values to be backpropogated.

To address this, we propose to modify the MCTS algorithm to maintain one value (and visit count) per capability type, rather than the single statistics used in the standard algorithm. In particular, each time the player progresses to the next level of search depth, it also moves to storing the values for the type corresponding to that search depth. The values are initialized (for depths > 1) to those of the previous depth.

To leverage the typed values, at the beginning of each iteration in the \textsc{BoundedMCTS}, we sample, according to the beliefs, a set of depths for the teammates. During selection, we continue to use the best (i.e., deepest) estimates of the value for our own nodes and the opponent's nodes, but for a teammate's node we use the value corresponding to the sampled type. Furthermore, we also recursively run another instance of \textsc{BoundedMCTS} with the teammate's hypothesized depth to simulate exploring as the teammate. To limit the amount of overhead due to the recursive simulation, we only allow the initial call to spawn additional simulations. Hence, the modified MCTS algorithm introduces overhead linear in the original agent's depth.

\Cref{alg:mct_ca} presents the capability-aware versions of the core MCTS algorithm, which access an additional input $B$ representing the beliefs of the agent. The new subroutine \textbf{SampleDepths} samples a set of depths for the agent's teammates based on the beliefs, \textbf{Select} additionally takes the sampled depths $D$ as input, and \textbf{Backpropogate} uses the current maximum depth $d$ to store the values for the appropriate type.

\begin{algorithm}
\caption{Capability-Aware MCTS}
\label{alg:mct_ca}
\begin{algorithmic}[1]

\Procedure{CA-MCTS}{$\mathcal{T}, n, d, m, B$}
\For {$i = 1, ..., n$}
    \State $D \gets \textbf{SampleDepths}(B\downarrow_d )$
    \State $s \gets \textbf{Select}(\mathcal{T}, d, D)$
    \State $\textbf{Expand}(s)$
    \State $\textbf{Simulate}(s, d, m)$
    \State $\textbf{Backpropogate}(\mathcal{T}, s, d)$
\EndFor
\EndProcedure
\end{algorithmic}
\end{algorithm}

Finally, we describe how we implement inference. Each time before a teammate takes an action, the player first runs a fresh instance of CA-MCTS from the current state. Because of the progressive nature of the search, pausing the search after a depth $d$ provides an estimate of the behavior of an agent of depth $d$. Hence, the player can simply read off the $Q$ values for each capability type since they are already stored in each node. Then after observing an action $a$ by the teammate, the player can compute the losses $\ell_a(d)$ to updates its beliefs. Note that the standard UCT action selection picks the most visited action, so in practice we substitute visit counts for the average value in the $Q$ function.

\section{Additional Experimental Results and Details}

In this section, we provide additional experiments result and details for the checkers environment.

\subsection{Hyperparameters}

For MCTS, all agents use $n=200$ for their search parameter (so that a depth $d$ agent searches for $100d(d+1)$ iterations in total), and $m=5$ for the number of rollouts. We use a discount rate of $0.9$ when planning, but use undiscounted cumulative rewards when determining the winner. For the belief updates, we set the temperature $T(t) = .1$, and additionally clip the losses to a maximum of $0.5$. All beliefs for the capability-aware agents are initialized to a uniform distribution (unless otherwise noted).

\subsection{Ablation Studies}

We performed an additional evaluation of our capability-aware MCTS algorithm with the following baselines:
\begin{enumerate}
    \item Capability-Aware (\textbf{CA}): Runs CA-MCTS with beliefs initialized to the uniform distribution, and performs belief updates.
    \item Oracle (\textbf{ORA}): Runs CA-MCTS with beliefs initialized to the true beliefs, and does not perform any belief updates.
    \item Oblivious (\textbf{OBL}): Runs the standard MCTS.
    \item No-Update (\textbf{NU}): Runs CA-MCTS with beliefs initialized to the uniform distribution, and does not perform any belief updates.
    \item Minimizer (\textbf{MIN}): Runs the standard MCTS, but models teammates as opponents (i.e., value minimizers).
    
\end{enumerate}

Each team consists of two players: an expert using one of the 5 strategies described above, and an oblivious novice. The setup is otherwise identical to the one in \Cref{sec:results:ma_vs_sa}: the depths of the novice and expert are drawn (without replacement) from $\{2, 4, 6, 8\}$, and they play against a single player whose depth is equal to either the novice or the expert.

\begin{table}[]
    \centering
    \begin{tabular}{c|c|c|c}
    \toprule
        Opponent & Team & \#Runs & Score (\%) \\
    \midrule
         \multirow{5}{*}{novice} & CA & 600 & \textbf{26.0} \\
         & ORA & 600 & 25.3 \\
         & OBL & 600 & 12.8 \\
         & NU & 600 & 19.0 \\
         & MIN & 600 & 15.2 \\
    \midrule
         \multirow{5}{*}{expert} & CA & 600 & -21.2 \\
         & ORA & 600 & \textbf{-14.5} \\
         & OBL & 600 & -17.7 \\
         & NU & 600 & -18.3 \\
         & MIN & 600 & -17.8 \\
    \midrule
    \midrule
         \multirow{5}{*}{total} & CA & 1200 & 2.4 \\
         & ORA & 1200 & \textbf{5.4} \\
         & OBL & 1200 & -2.4 \\
         & NU & 1200 & 0.3 \\
         & MIN & 1200 & -1.3 \\
    \bottomrule
    \end{tabular}
    \caption{Performance of oracle (\textbf{ORA}), capability aware (\textbf{CA}), oblivious (\textbf{OBL}), no update (\textbf{NU}), and min team (\textbf{MIN}) strategy experts when paired with an oblivious novice, playing against an opponent of either novice or expert depth. Scores are from the perspective of the team (a positive score means that the team is stronger than the single opponent).}
    \label{tab:checkers_baseline}
\end{table}

\Cref{tab:checkers_baseline} reports the results. 
In aggregate, the oracle performs the best, with a total score of 5.4, which is to be expected. This result confirms that our capability-aware MCTS algorithm is able to correctly adjust its behavior for a teammate's depth.

Our capability-aware agent with inference performs second best with a total score of 2.4, performing as well as the oracle against a novice agent. However, the team underperforms against an expert, which we attribute to mispredictions prior to convergence of beliefs. Note that each game is a challenging ``one-shot'' setup, where the beliefs do not carry over between games. In a repeated game setting, we would expect the beliefs to converge, and the difference between the oracle and capability-aware teams to disappear.

The worst performing team is the oblivious team, which confirms our motivation that being capability-aware is important in cooperative settings with heterogeneous capabilities. The team with the expert that models the novice as an opponent performs poorly as well, due to playing too conservatively. Finally, we note that simply using a uniform prior and playing with the CA-MCTS algorithm yields marginally better results than being either oblivious or completely minimizing. This strategy corresponds to an agent who adapts its behavior for the \textit{possibility} that its teammate will have different capabilities, but does not attempt to infer the true depth.

\end{document}